\newtheorem{example}{Example}
\newtheorem{remark}{Remark}
\newtheorem{assumption}{Assumption}
\newtheorem{definition}{Definition}
\newtheorem{theorem}{Theorem}
\newtheorem{proposition}{Proposition}
\newcommand{\R}{\mathbb{R}}
\newcommand{\1}{\mathbf{1}} 
\newcommand{\dist}{\rm dist}
\newcommand{\dst}{\displaystyle}
\def\be{\begin{equation}}
\def\ee{\end{equation}}
\def\ba{\begin{array}}
\def\ea{\end{array}}
\def\eqa{\begin{eqnarray}}
\def\eqe{\end{eqnarray}}
\newcommand{\mc}{\mathcal}
\begin{document}

\title{Internal models for nonlinear output agreement and optimal flow control
}
\author[1]{Mathias B\"{u}rger\thanks{Work supported in part by the Cluster of Excellence in Simulation Technology (EXC301/2) at the University of Stuttgart.}}
\author[2]{Claudio De Persis\thanks{Work supported by the research grants
{\it Efficient Distribution of Green Energy} (Danish Research Council of Strategic Research), {\it Flexiheat} (Ministerie van Economische Zaken, Landbouw en Innovatie), and by a starting grant of the Faculty of Mathematics and Natural Sciences, University of Groningen.}}
\affil[1]{Institute for Systems
Theory and Automatic Control, University of Stuttgart, Pfaffenwaldring 9, 70550 Stuttgart, Germany {\tt mathias.buerger@ist.uni-stuttgart.de}
}
\affil[2]{ITM, Faculty of Mathematics and Natural Sciences, University of Groningen, Nijenborgh 4,  9747 AG Groningen, the Netherlands {\tt c.de.persis@rug.nl}
}

\maketitle 

\begin{abstract}
This paper studies the problem of output agreement in networks of nonlinear dynamical systems under time-varying disturbances. Necessary and sufficient conditions for output agreement are derived for the class of incrementally passive systems. Following this, it is shown that the optimal distribution problem in dynamic inventory systems with time-varying supply and demand can be cast as a special version of the output agreement problem. We show in particular that the time-varying optimal distribution problem can be solved by applying an internal model controller to the dual variables of a certain convex network optimization problem. 
\end{abstract}


\section{Introduction}
%
Output agreement has evolved as one of the most important control objectives in cooperative control. It has been studied in various different contexts, ranging from distributed optimization (\cite{Tsitsiklis1986}) up to oscillator synchronization (\cite{Stan2007}). 
Adding up to these results, we discuss in this paper the output agreement problem in the context of optimal distribution control for inventory networks with time-varying supply. \\
%
Internal model control tools have been used to handle output agreement problems in a variety of different formulations, see e.g.,(\cite{Wieland2011}), (\cite{Bai2011}), (\cite{DePersis2012}). 
We consider here a different problem set-up, involving time-varying external disturbance signals, and solve the output agreement problem for the class of incrementally passive systems. Our derivations follow the trail opened by \cite{Pavlov2008}. \\
The output agreement problem with time-varying external signals, studied in this paper, turns out to be of particular relevance for the routing control in inventory systems. We consider a simple inventory system, taking into account the storage levels and routing between the different inventories. This dynamics models, e.g., supply chains (\cite{Alessandri2011}) or data networks (\cite{Moss1983}). A key challenge in such inventory systems is to handle a time-varying supply and demand in an optimal way, while using only a distributed control strategy.  \\
%
%
The contributions of this paper are twofold. 
First, we present necessary and sufficient conditions for the \emph{output agreement problem} under time-varying disturbances. We consider networks of nonlinear systems interacting according to an undirected network topology. Following an internal model control approach, we consider controllers placed on the edges of the network and provide necessary conditions for the output agreement problem to be feasible. Sufficient conditions for output agreement in networks of incrementally passive dynamical systems are provided. We prove that the output agreement problem is feasible if one can find an incrementally passive internal model controller.
As a second contribution, we show that the \emph{optimal distribution problem} in inventory systems with time-varying supply and demand can be cast as an output agreement problem. The necessary conditions for the optimal distribution problem are a specific representation of the well known regulator equations. Subsequently, we present controllers solving the optimal distribution problem, for either quadratic cost functions or constant supplies. The internal model approach generalizes the results of \cite{DePersis2013}. \\
%
The remainder of the paper is organized as follows. The basic problem formulation and necessary conditions for output agreement are presented in Section \ref{sec.OutputAgreementNec}. Sufficient conditions for output agreement in networks of incrementally passive systems are discussed in Section \ref{sec.Sufficient}. The time-varying optimal distribution problem is introduced in Section \ref{sec.OptimalDistribution}, where also necessary conditions are discussed. We present then the solution to the problem for linear-quadratic problems in Section \ref{sec.LQ} and for constant supplies in Section \ref{sec.ConstantSupply}.

\textbf{Preliminaries:}
The notation we employ is standard. The set of (positive) real numbers is denoted by $\mathbb{R}$ ($\mathbb{R}_{\geq}$). The distance of a point $q$ from a set $\mathcal{A}$ is defined as $\mbox{dist}_{\mc{A}} q = \mbox{inf}_{p \in \mc{A}} \|p - q\|$.
The range-space and null-space of a matrix $B$ are denoted by $\mc{R}(B)$ and $\mc{N}(B)$, respectively.
A graph $\mc{G}=(V,E)$ is an object consisting of a finite set of nodes, $|V| = n$, and edges, $|E| = m$.
The incidence matrix $B \in \mathbb{R}^{n \times m}$ of the
graph $\mc{G}$ with arbitrary orientation, is a $\{0, \pm 1\}$ matrix with $[B]_{ik}$ having value
`+1' if node $i$ is the initial node of edge $k$, `-1' if it is the terminal
node, and `0' otherwise. 
We refer sometimes to the flow space of $\mc{G}$ as the null space $\mc{N}(B^{T})$, and the cut space of $\mc{G}$ as the range space $\mc{R}(B)$. 
Additionally, $\mc{N}(B)$ is named the circulation space of $\mc{G}$, and  $\mc{R}(B^{T})$ the differential space. 

\section{Problem formulation and necessary conditions} \label{sec.OutputAgreementNec}

We consider a network of dynamical systems defined on a connected, undirected graph $\mc{G}=(V,E)$. Each node represents a nonlinear system 
\be\label{nonl.systems.0}\ba{rcll}
\dot x_i &=& f_i(x_i, u_i, w_i)\\
y_i &=& h_i(x_i, w_i), & i=1,2,\ldots, n,
\ea\ee
where $x_i\in \R^{r_i}$ is the state, $u_i , y_i\in \R^{p}$ are the input and the output, respectively. Each system \eqref{nonl.systems.0} is driven by the time-varying signal $w_i\in \R^{q_i}$, representing, e.g.,~a disturbance. 
We assume that the exogenous signal $w_i$ is generated by a dynamical system of the form $\dot w_i =s_i(w_i)$.  In what follows it will be useful to require a gradient structure, i.e., $s_i(w_i)=\nabla \Sigma_i(w_i)$, with $\Sigma_i(w_i)$ a concave function. Then for all $w_i, w_i'$
\[
(w_i-w_i')^T (s_i(w_i)- s_i(w_i'))\le 0. 
\]
As an example of $\nabla \Sigma_i(w_i)$ consider the linear function with skew-symmetric matrix
\[
\nabla \Sigma_i(w_i)= S_i w_i,\quad S_i^T+S_i=0. 
\]
We stack together the signals $w_i$, for $i=1,2,\ldots, n$, and obtain  the vector $w$, which satisfies the equation $\dot w =s(w)$. In what follows, whenever we refer to the solutions of  $\dot w =s(w)$, we assume that the initial condition is chosen in a compact set ${\cal W}={\cal W}_1\times \ldots \times {\cal W}_n$. 
Similarly, let $x$, $u$, and $y$, be the stacked vectors of $x_{i}, u_{i},$ and $y_{i}$, for $i=1,\ldots,n$, respectively. Using this notation, the totality of all systems (\ref{nonl.systems.0}) is given by
\be\label{nonl.systems.tot}
\ba{rcl}
\dot x &=& f(x, u, w)\\
y &=& h(x,w)
\ea
\ee
and set ${\cal X}=\R^{r_1}\times \ldots \times \R^{r_n}$. \\
The control objective is to reach output agreement of all nodes in the network, independent of the exact representation of the time-varying external signals. 
Therefore, a controller of the form 
\be\label{nonl.im}
\ba{rcl}
\dot \xi_k &=& F_{k}(\xi_k, v_k)\\
\lambda_k &=& H_{k}(\xi_k), \quad k=1,2,\ldots, m,
\ea
\ee
with state $\xi_k\in \R^{\nu_k}$ and input $v_k\in \R^p$, is placed between any pair of neighboring nodes.
When stacked together, the controllers (\ref{nonl.im}) give raise to the overall controller
\be\label{im.big}
\ba{rcl}
\dot \xi &=& F(\xi,v)\\
\lambda &=& H(\xi),
\ea\ee
where $\xi \in \Xi=\R^{\nu_1}\times\ldots\times \R^{\nu_m}$. 
Throughout the paper the following interconnection structure between the plants, placed on the nodes of $\mc{G}$, and the controllers, placed on the edges of $\mc{G}$ is considered. A controller (\ref{nonl.im}), associated with edge $k$ connecting nodes $i,j$, has access to the measurement of the relative outputs $y_i-y_j$. In vector notation, the relative outputs of the systems are
\be\label{interc0}
z= (B\otimes I_p)^T y.
\ee
The controllers are then driven by the systems via the interconnection condition
\be\label{interc1}
v=-z,
\ee
where $v$ are the stacked inputs of the controllers.
Additionally, the output of a controller $\lambda_{k}$ influences its two incident systems. Thus, the stacked vector of controller's output $\lambda$ drives the process (\ref{nonl.systems.tot}) via the interconnection\footnote{
The interconnection structure \eqref{interc0}, \eqref{interc2} naturally represents a canonical structure for distributed control laws. This structure is often considered in the context of passivity-based cooperative control, see e.g., \cite{Arcak2007}, \cite{Bai2011},  \cite{Schaft2012}, \cite{DePersis2012a}, \cite{Burger2013}.}
\be\label{interc2}
u = (B \otimes I_p)\lambda. 
\ee 
We are now ready to formally introduce the output agreement problem. 
\begin{definition} \label{def.OutputAgreement}
The \emph{output agreement problem} is solvable for the process (\ref{nonl.systems.tot}) under the interconnection relations (\ref{interc0}), (\ref{interc1}), (\ref{interc2}), if there exists controllers  (\ref{im.big}), such that every solution originating from ${\cal W}\times {\cal X}\times \Xi$ is bounded and satisfies $\lim_{t\to \infty}~(B^T\otimes~I_p)~y(t)~=~\mathbf{0}$.  
\end{definition}
The first step is to investigate \emph{necessary conditions} for the output agreement problem to be solvable. 
The closed-loop system 
(\ref{nonl.systems.tot}), (\ref{interc0}), (\ref{interc1}), (\ref{interc2}),  (\ref{im.big})
can be written as
\be\label{overall}\ba{rcll} 
\dot w &=& s(w)\\
\dot x &=& f (x, (B\otimes I_p) H(\xi), w)\\
\dot \xi &=& F(\xi,-(B\otimes I_p)^T h (x , w)). 
\ea\ee
If the output agreement problem is solvable, then all the solutions to (\ref{overall}) starting from 
${\cal W}\times {\cal X}\times \Xi$ converge to the $\omega$-limit set  $\Omega({\cal W}\times {\cal X}\times \Xi)$. Notice that by boundedness of the solutions such a set is non-empty, compact and invariant. 
Furthermore, the $\omega$-limit set  $\Omega({\cal W}\times {\cal X}\times \Xi)$ must be a subset of the set of all pairs $(w,x)$ for which $(B\otimes I_p)^T h (x, w) = 0$.
Let now $(w, x^w, \xi^w)$ be a solution to 
(\ref{overall}) starting from $\Omega({\cal W}\times {\cal X}\times \Xi)$. By the invariance of the $\omega$-limit set, we have
\be\label{regulator.equations}\ba{rcl}
\dot x^w &=& f(x^w, u^w,w)\\
0&=& (B\otimes I_p)^T h (x^w, w)
\ea\ee
and 
\be \label{controller.constraints_O}
\ba{rcl}
\dot \xi^w &=&  F(\xi^w,0)\\
u^w &=& (B\otimes I_{p}) H(\xi^w). 
\ea\ee
We summarize the necessary condition as follows.
\begin{proposition}
If the output agreement problem is solvable, then, for every initial condition in $\Omega({\cal W}\times {\cal X}\times \Xi)$, there must exist solutions $(w, x^w, \xi^w)$ such that  the equations (\ref{regulator.equations}), (\ref{controller.constraints_O}) are satisfied. 
\end{proposition}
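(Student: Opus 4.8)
The plan is to exploit the autonomous structure of the closed-loop system \eqref{overall} in the extended state $(w,x,\xi)$, together with standard properties of $\omega$-limit sets. First I would record that, since the output agreement problem is solvable, every solution issuing from ${\cal W}\times{\cal X}\times\Xi$ is bounded; hence the $\omega$-limit set $\Omega({\cal W}\times{\cal X}\times\Xi)$ is non-empty, compact and invariant under the flow of \eqref{overall}, and every such solution converges to it. This is precisely the content anticipated in the text preceding the statement.

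The second step locates the $\omega$-limit set inside the zero--relative-output manifold. By solvability, the relative output $(B\otimes I_p)^T h(x(t),w(t))$ tends to $\mathbf{0}$ as $t\to\infty$. Given any point $(\bar w,\bar x,\bar\xi)\in\Omega$, I would choose a sequence $t_n\to\infty$ with $(w(t_n),x(t_n),\xi(t_n))\to(\bar w,\bar x,\bar\xi)$; continuity of $h$ and of multiplication by $(B\otimes I_p)^T$ then force $(B\otimes I_p)^T h(\bar x,\bar w)=\mathbf{0}$. Thus $\Omega$ is contained in the set of pairs $(w,x)$ for which $(B\otimes I_p)^T h(x,w)=\mathbf{0}$.

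The third step converts invariance into the regulator equations. I would pick any initial condition in $\Omega$ and let $(w,x^w,\xi^w)$ be the corresponding solution of \eqref{overall}. By invariance this trajectory remains in $\Omega$ for all $t$, so the algebraic constraint $(B\otimes I_p)^T h(x^w,w)=\mathbf{0}$ holds identically in $t$, which is the second line of \eqref{regulator.equations}. Consequently the controller input along this solution is $v=-(B\otimes I_p)^T h(x^w,w)=\mathbf{0}$, so the $\xi$-dynamics reduce to $\dot\xi^w=F(\xi^w,0)$ and the process input becomes $u^w=(B\otimes I_p)H(\xi^w)$; substituting back into \eqref{overall} gives $\dot x^w=f(x^w,u^w,w)$. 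Collecting these identities reproduces \eqref{regulator.equations} and \eqref{controller.constraints_O}.

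I expect the only genuinely delicate point to be the passage in the second step: one must guarantee that the convergence of the relative output to $\mathbf{0}$ survives along the subsequence approaching an arbitrary $\omega$-limit point, which rests on continuity of $h$ together with boundedness, so that $h$ is evaluated on a compact set. Everything else follows directly from the invariance of $\omega$-limit sets of autonomous systems.
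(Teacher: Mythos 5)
Your proposal is correct and follows essentially the same route as the paper: boundedness from solvability gives a non-empty, compact, invariant $\omega$-limit set; convergence of the relative outputs plus continuity of $h$ places $\Omega({\cal W}\times{\cal X}\times\Xi)$ inside the zero relative-output set; and invariance then forces $v=0$ along any solution starting in $\Omega$, which yields (\ref{regulator.equations}) and (\ref{controller.constraints_O}). The only difference is that you spell out the limit-point/continuity argument that the paper states without proof, which is a welcome addition rather than a deviation.
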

The constraint \eqref{regulator.equations} ensures that there exists a feed-forward control input $u^{w}$ that keeps the systems in output agreement. The second constraint \eqref{controller.constraints_O} ensures that the controller \eqref{im.big} is able to generate this feed-forward input signal. \\
The constraints (\ref{controller.constraints_O}) can be rewritten independently of the controller (\cite{Isidori2010}). 
 Let in the following $\lambda^{w}$ be some trajectory satisfying $u^{w} = (B\otimes I_{q})\lambda^{w}$, where $u^{w}$ is a solution to \eqref{regulator.equations}. Note that $\lambda^{w}$ is only then uniquely defined if the graph $\mc{G}$ has no cycles. Otherwise, it can be varied in the circulation space of $\mc{G}$. 
Bearing in mind the structure of the controllers (\ref{im.big}), it descends from the constraints (\ref{controller.constraints_O}) that 
 \be\label{controller.constraints}
\ba{rcl}
\dot \xi^w &=&  F(\xi^w,0)\\
\lambda^w &=& H(\xi^w). 
\ea\ee
Suppose now, that there exists an integer $d$ and maps  $\tau: \mc{W} \mapsto \mathbb{R}^{d}$, $\phi : \mathbb{R}^{d} \mapsto \mathbb{R}^{d}$ and $\psi: \mathbb{R}^{d} \mapsto \mathbb{R}^{mp}$ satisfying 
\be\label{embedding}\ba{rcl}
\dst\frac{\partial \tau}{\partial w} s(w) &=& \phi(\tau(w))\\
\lambda^w &=& \psi(\tau(w)). 
\ea\ee
Observe that $\tau, \phi, \psi$ do not depend on the controller since $\lambda^w$ depends on $B$ and $u^w$, the latter being dependent only on the process to control and $B$ on the topology of the underlying graph. 
Now, the dynamical system
\be\label{internal.model}\ba{rcl}
\dot \eta &=& \phi(\eta), \; \eta \in \mathbb{R}^{d} \\
\lambda &=& \psi(\eta). 
\ea\ee
has the following property. If $\eta_0= \tau(w(0))$, then the solution $\eta(t)$ to (\ref{internal.model}) starting from $\eta_0$ is  such that $\lambda(t)=\lambda^w(t)$ for all $t\ge 0$. \\
%
For designing a controller with the structure \eqref{im.big}, i.e., that decomposes into controllers on the edges of $\mc{G}$,
 we introduce a vector $\eta_{k} \in \mathbb{R}^{d}$ for each edge $k = 1,\ldots,m,$ and denote with $\psi_{k}$ the entries of the vector valued function $\psi$ corresponding to the edge $k$. Each edge is now assigned a controller of the form
\be\label{internal.model.k1}\ba{rcll}
\dot \eta_k &=& \phi(\eta_k) \\
\lambda_k &=& \psi_k(\eta_k), & k=1,2,\ldots, m.  
\ea\ee
%
With the stacked vector $\eta = [\eta_{1}^{T}, \ldots, \eta_{m}^{T}]^{T}$ and the vector valued functions 
\begin{align} \label{eqn.GlobalOMFunc}
 \bar{\phi}(\eta) = \begin{bmatrix} \phi(\eta_{1}) \\ \vdots \\ \phi(\eta_{m})\end{bmatrix}, \quad  \bar{\psi}(\eta) = \begin{bmatrix} \psi_{1}(\eta_{1}) \\ \vdots \\ \psi_{m}(\eta_{m}) \end{bmatrix}.
 \end{align} 
the overall controller is given as
\begin{align} \label{sys.GlobalIM}
\begin{split}
\dot{\eta} &= \bar{\phi}(\eta) \\
\lambda &= \bar{\psi}(\eta).
\end{split}
\end{align}
Note that if the initial condition is chosen as $\eta_0= I_{m} \otimes \tau(w(0))$ then the solution $\eta(t)$ to (\ref{internal.model}) starting from $\eta_0$ is  such that $\lambda(t)=\lambda^w(t)$ for all $t\ge 0$,  which is the same property expressed by (\ref{controller.constraints}).

\begin{remark}
If the functions $\phi(\eta)$ and $\psi(\eta)$ are linear, that is $\phi(\eta) = \Phi\eta\; (\Phi \in \mathbb{R}^{d \times d})$, and $\psi(\eta) = \Psi\eta\; (\Psi \in \mathbb{R}^{mp \times d})$, then $\psi_{k} = \Psi_{k}^{T}\; (\Psi_{k}^{T} \in \mathbb{R}^{p \times d})$ are the rows of the matrix $\Psi$ corresponding to the edge $k$, and the global functions \eqref{eqn.GlobalOMFunc} are given by 
$ \bar{\phi}(\eta) = (I_{m} \otimes \Phi) \eta$, and $\bar{\psi}(\eta) = \mbox{block.diag}( \Psi_{1}^{T}, \ldots,  \Psi_{m}^{T}) \eta$ (see also \cite{DePersis2013}). 
\end{remark}
In summary, the overall controller (\ref{sys.GlobalIM}) can be interpreted as internal-model-based controllers placed at the edges of the graph $\mc{G}$, where all controllers have the same global internal model.
The role of the internal model in problems of coordination in networked systems has been investigated in \cite{Wieland2011} for linear systems and in \cite{Wieland2010a}, Chapter 5 for  nonlinear systems. The result above adds up to these results.

\begin{remark} \label{rmk.LinearSystems}
 The condition \eqref{embedding} might be difficult to meet for general nonlinear problems. However, the condition can in fact always be met if both the internal model and the desired feed-forward input are linear in the disturbance, i.e., $s(w)=Sw$ and $\lambda^w=\Lambda w$, see \cite{Francis1976}.
\end{remark}


\begin{remark}
Suppose that the $\omega$-limit set can be expressed as
$
\Omega({\cal W}\times {\cal X}\times \Xi)=\{(w, x, \xi)\,:\,  x=\pi(w), \xi=\pi_c(w)\}. 
$
Then $x^w= \pi(w)$ and the so-called regulator equations (\ref{regulator.equations}) express the existence of an invariant manifold where the ``regulation error" $(B^T\otimes I_p)y$ is identically zero provided that the control input $u^w$ is applied. The conditions (\ref{controller.constraints}) express the existence of a controller able to  provide $u^w$.  Moreover, (\ref{regulator.equations}), (\ref{controller.constraints}) take the familiar expressions, see e.g. \cite{Isidori1990}:
\begin{align}
\begin{split} \label{eqn.FBI_Sys}
\ba{rcl}
\dst\frac{\partial \pi}{\partial w} s(w) &=&  f(\pi(w), (B\otimes I_p) H(\pi_c(w)),w) \\
0&=& (B\otimes I_p)^T h (\pi(w), w)
\ea
\end{split}
\end{align}
and 
\begin{align}
\begin{split} \label{eqn.FBI_Cont}
\ba{rcl}
\dst\frac{\partial \pi_c}{\partial w} s(w) &=&  F(\pi_c(w),0)\\
\lambda(w) &=&H(\pi_c(w)).
\ea
\end{split}
\end{align}
\end{remark}

\setcounter{example}{0} 

\begin{example}
Consider linear systems of the form
\be\label{lin.sys}\ba{rcl}
\dot x_i &=& A_i x_i +G_i u_i  +P_i w_i\\
y_i &=& C_i x_i 
\ea\ee 
Define the stacked system with $x = [x_{1}^{T},\ldots, x_{n}^{T}]^{T},$ and $A=\textrm{block.diag}\{A_1, \ldots, A_N\}$. Let $u$, $w$, $G$, and $S$, be defined equivalently.
%
There exist $x^w=\Pi w$ and $u^w =\Gamma w$ satisfying (\ref{regulator.equations}) if and only if $\Pi, \Gamma$ satisfy the Sylvester equation
\begin{align}\label{re1}
\begin{split}
&\Pi S = A \Pi +G \Gamma +P \\ 
&(B\otimes I_{p})^T C\Pi=0.
\end{split}
\end{align}
The equations are feasible if and only if 
\[
\left(\ba{cc}
A-\lambda I & G\\
(B\otimes I_{p})^T  C & 0 
\ea
\right)
\]
is full-row rank for each $\lambda$ in the spectrum of $S$.%
\footnote{ We can establish at this point a connection to the equilibrium independent passivity framework studied in \cite{Burger2013}.
In the case $A_i$ is invertible and $S_i =0$, the matrix $\Pi_i$ that satisfies (\ref{re1}) for any $\Gamma_i$ is $\Pi_i =  -A_i^{-1}(B_i \Gamma_i +P_i)$ and $x_i^w =-A_i^{-1}(B_i \Gamma_i +P_i)w_i$. The output $y^w$ becomes $-C_i(A_i^{-1}(B_i \Gamma_i +P_i)w_i= -C_i A_i^{-1}B_i u_i^w -C_i A_i^{-1}P_i w_i$ which coincides with the equilibrium input-to-output maps $k_{y_i}(u_i)$ considered in \cite{Burger2013}, Example 3.4.} \\
\end{example}

%
%
%
%

\section{Output agreement under time-varying disturbances} \label{sec.Sufficient}
In this section we highlight sufficient conditions that lead to a solution of the problem for a special class of systems  (\ref{nonl.systems.0}), namely incrementally passive systems, see e.g., \cite{Pavlov2008}, to which we refer the reader for the definition of a \emph{regular} storage function.
\begin{definition}\label{pip}			
The system \eqref{nonl.systems.0} is said to be incrementally passive if there exists a $C^{1}$ regular storage function $V:\mathbb{R}_{\geq 0} \times \mathbb{R}^{r_{i}} \times \mathbb{R}^{r_{i}} \mapsto \mathbb{R}_{\geq 0}$ such that for any two inputs $u_{i}, u_{i}'$ and any two solutions $x_{i}$,$x_{i}'$, corresponding to these inputs, the respective outputs $y_{i}$, $y_{i}'$ satisfy
\be\label{di.x}\ba{l}
\dst\frac{\partial V_i}{\partial t}+\dst\frac{\partial V_i}{\partial x_i} f_i(x_i, u_i , w_i) +
\dst\frac{\partial V_i}{\partial x'_i} f_i(x_i', u_i ', w_i) \\
\le (y_i-y_i')^T (u_i -u_i ').
\ea
\ee
\end{definition}


\setcounter{example}{0} 
\begin{example}(ctnd.) Linear  systems of the form \eqref{lin.sys} that are passive from the input $u_i $ to the output $y_i$ satisfy the assumption above, with $V_i=\frac{1}{2} (x_i-x_i')^T Q_i (x_i-x_i')$ and  $Q_i=Q_i^T>0$ the matrix such that $A_i^T Q_i +Q_i A_i\le 0$ and $Q_i G_i =C_i^T$.  
\end{example}

\begin{example} 
Nonlinear systems of the form
\[\ba{rcl}
\dot x_i &=& f_i(x_i) +G_i u_i  +P_i w_i\\
y_i &=& C_i x_i 
\ea\] 
with $f_i(x_i)=\nabla F_i(x_i)$, $F_i(x_i)$ twice continuously differentiable and concave, and $G_i=C_i^T$ are incrementally passive. In fact, by concavity of $F_i(x_i)$, $(x_i-x_i')^T ( f_i(x_i)- f_i(x_i'))\le 0$, and $V_i=\frac{1}{2} (x_i-x_i')^T (x_i-x_i')$ is the incremental storage function. 
\end{example}
In the previous section, it was shown that the controllers at the edge have to take the form 
\be\label{internal.model.k}\ba{rcll}
\dot \eta_k &=& \phi(\eta_k) \\
\lambda_k &=& \psi_k(\eta_k), & k=1,2,\ldots, m. 
\ea\ee
%
Now, they must be completed by considering additional control inputs that guarantee the achievement of the steady state. While we require the internal model to be identical for all edges, i.e., $\phi(\eta_k)$, the new augmented systems might be different. 
Then, the controllers modify as 
\be\label{internal.model.k.plus.stab}\ba{rcll}
\dot \eta_k &=& \phi_{k}(\eta_k,v_k) \\
\lambda_k &=& \psi_k(\eta_k), & k=1,2,\ldots, m, 
\ea\ee
where all controllers reduce to the common internal model if no external forcing is applied, i.e., $\phi_k(\eta_k,0) =\phi(\eta_k) $. 
The following is the main standing assumption that the controllers must satisfy to solve the output agreement problem for the class of incrementally passive systems:

\begin{assumption}\label{a.ip.im}
For each $k=1,2,\ldots, m$, there exists  regular functions $W_k(\eta_k, \eta_k')$, with $W_k:\R^{q_k}\times \R^{q_k}\to \R_+$ such that
\begin{align}\label{di.eta} 
\begin{split}
 \dst\frac{\partial W_k}{\partial \eta_k} \phi_k(\eta_k, v_k) + 
 \dst\frac{\partial W_k}{\partial \eta'_k} \phi_k(\eta'_k, v'_k) \\
 \le
 (\lambda_k-\lambda_k')^T (v_k-v_k').
 \end{split}
\end{align}
\end{assumption}
It is in general difficult to design the incrementally passive 
 controllers above. A first simple example when the design is possible is when the feedforward control input is linear, that is (\ref{embedding}) is satisfied with $\tau= {\rm Id}$, $\phi=s$ and $\psi$ is a linear function of its argument. In this case, we let 
\[
\phi_k(\eta_k, 0)=s(\eta_k), \quad \psi_k(\eta_k)= M_k\eta_k
\]
and define 
\begin{align} \label{eqn.IncPassiveIM}
\phi_k(\eta_k, v_k)=s(\eta_k)+M_k^T v_k.
\end{align}
Then by definition of $s$ as the gradient of a concave function, the storage function $W_k(\eta_k, \eta_k')=\frac{1}{2} (\eta_k-\eta_k')^T (\eta_k-\eta_k')$ satisfies
\begin{align}
\begin{split} 
 \dst\frac{\partial W_k}{\partial \eta_k} \phi_k(\eta_k, v_k) +
 \dst\frac{\partial W_k}{\partial \eta'_k} \phi_k(\eta'_k, v'_k)
= \\
 (\eta_k-\eta_k')^T (s(w_k)- s(w_k)') \\
 +(\eta_k-\eta_k')^T M_k^T (v_k-v_k')\\
 \le   (\psi_k(\eta_k)-\psi_k(\eta_k')) (v_k-v_k'), 
\end{split}
\end{align}
that is (\ref{di.eta}).

%
%
%

%
We state below the main result of the section that, while  extending to networked systems the results 
of \cite{Pavlov2008}, provides a solution to the output agreement problem in the presence of time-varying disturbances. It is a slightly more general statement than Theorem 2 in \cite{DePersis2013}.  
\begin{theorem}\label{th1}
Consider the system (\ref{nonl.systems.tot})
\be\label{nonl.systems.tot.again}\ba{rcl}
\dot w &=& s(w)\\
\dot x &=& f(x,u,w)\\
\ea
\ee
and let the regulator equations (\ref{regulator.equations}) hold. Consider the controllers
\be\label{controller.nu}
\ba{rcl}
\dot \eta &=& \bar{\phi}(\eta,v)\\
\lambda &=& \bar{\psi}(\eta) +\nu
\ea\ee
where $\nu$ is an extra feedback to design, and let $\bar{\phi}$ and $\bar{\psi}$ be the stacked functions of $\phi_{k}(\eta_{k},v_{k})$ and $\psi_{k}(\eta_{k}$) with the internal model property being satisfied. Consider the interconnection conditions
\be\label{int.constr}\ba{ll}
u=(B\otimes I_p) \lambda, \\ 
v=-(B^T\otimes I_p) y. \\ 
\ea\ee
If Assumption \ref{a.ip.im} holds and 
\[
\nu=-z,
\]
with $z=(B^T\otimes I_p) y$, 
then the output agreement problem is solvable, that is every solution starting from ${\cal W}\times {\cal X}\times \Xi$ is bounded and 
\[
\lim_{t\to+\infty} z(t) = \lim_{t\to+\infty} (B\otimes I_p)^T y(t) =\mathbf{0}.  
\]
\end{theorem}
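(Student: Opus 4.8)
The plan is to run a LaSalle/Barbalat argument built on a single composite incremental storage function, combining the plant incremental passivity of Definition \ref{pip} with the controller incremental passivity of Assumption \ref{a.ip.im}, and to exploit the skew-symmetric interconnection structure so that all indefinite cross terms cancel.

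First I would construct the reference (steady-state) trajectory against which the actual closed loop is compared. Since the regulator equations (\ref{regulator.equations}) hold and the internal model property is satisfied, driving $\dot w = s(w)$ from $w(0)$ produces a trajectory $\eta^w$ of $\dot\eta^w = \bar\phi(\eta^w,0)$ with feedforward output $\lambda^w = \bar\psi(\eta^w)$, and a corresponding $x^w$ solving $\dot x^w = f(x^w,(B\otimes I_p)\lambda^w,w)$ lying on the invariant manifold, so that the reference regulation error $z^w := (B\otimes I_p)^T h(x^w,w)$ vanishes identically and the reference controller input is $v^w = -z^w = 0$. This reference is bounded because it lives on the compact, invariant $\omega$-limit set.

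Next I would introduce $U = \sum_{i=1}^n V_i(t,x_i,x_i^w) + \sum_{k=1}^m W_k(\eta_k,\eta_k^w)$ and differentiate along the closed loop. Writing $\bar\lambda = \bar\psi(\eta)$ and $\bar\lambda^w = \bar\psi(\eta^w)$, inequality (\ref{di.x}) summed over the nodes gives $\sum_i \dot V_i \le (y-y^w)^T(u-u^w)$, while Assumption \ref{a.ip.im} summed over the edges gives $\sum_k \dot W_k \le (\bar\lambda-\bar\lambda^w)^T(v-v^w)$. I would then substitute the interconnection relations $u=(B\otimes I_p)(\bar\lambda+\nu)$ with $\nu=-z$, together with $u^w=(B\otimes I_p)\bar\lambda^w$, $v=-z$, $v^w=0$, and use $(B\otimes I_p)^T(y-y^w)=z-z^w=z$. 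The first term becomes $z^T(\bar\lambda-\bar\lambda^w-z)$ and the second $-z^T(\bar\lambda-\bar\lambda^w)$; the indefinite cross terms $\pm z^T(\bar\lambda-\bar\lambda^w)$ cancel, leaving $\dot U \le -\|z\|^2 \le 0$. I would stress that it is precisely the extra feedback $\nu=-z$ that injects the dissipation $-\|z\|^2$: without it the computation yields only $\dot U\le 0$, insufficient for asymptotic convergence.

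Finally, since $U$ is nonincreasing and, being built from regular storage functions, is proper in the incremental coordinates, and since the reference and $w(t)$ are bounded, every solution from $\mathcal{W}\times\mathcal{X}\times\Xi$ is bounded. To conclude $z(t)\to\mathbf{0}$ I would invoke an invariance argument: either view $(w,x,\eta)$ jointly with the reference $(x^w,\eta^w)$ as an autonomous system and apply the LaSalle invariance principle on the compact invariant $\omega$-limit set, identifying the largest invariant subset of $\{\dot U=0\}=\{z=0\}$; or, more directly, integrate $\dot U \le -\|z\|^2$ to obtain $\int_0^\infty \|z\|^2\,dt<\infty$ and apply Barbalat's lemma after checking that $\dot z$ is bounded, which follows from boundedness of all states and smoothness of $s,f,h$. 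The main obstacle is exactly this last step: because the disturbance $w(t)$ does not settle to an equilibrium, the closed loop is genuinely time-varying, so the elementary autonomous LaSalle theorem does not apply verbatim and one must either pass to the extended autonomous system, handling the explicit time dependence of the regular storage functions in the sense of \cite{Pavlov2008}, or verify the uniform continuity hypothesis required by Barbalat.
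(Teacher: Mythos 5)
Your proposal is correct and follows essentially the same argument as the paper: the same composite incremental storage function $U = \sum_i V_i + \sum_k W_k$ compared against the steady-state trajectory $(x^w,\eta^w)$ on the $\omega$-limit set, the same cancellation of the cross terms via the skew-symmetric interconnection leaving $\dot U \le -\|z\|^2$ (the paper books the $\nu$-term inside $\lambda = \bar\psi(\eta)+\nu$ and subtracts $(\nu-\nu^w)^T v$, whereas you fold it into $u$; these are equivalent), and the same conclusion via integrating $\dot U$ and applying Barbalat's lemma after verifying boundedness of $\dot z$ from boundedness of $x,\eta,z,w$ and smoothness of $f,h,s$. Your closing caveat about the time-varying nature of the closed loop is exactly why the paper also opts for Barbalat rather than an autonomous LaSalle argument.
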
 

\begin{proof}
By the incremental passivity property of  the $x$ subsystem in (\ref{nonl.systems.tot.again})  and (\ref{regulator.equations}), it is true that 
\begin{align*}
\dst\frac{\partial V}{\partial t}+\dst\frac{\partial V}{\partial x} f(x, u , w) +
\dst\frac{\partial V}{\partial x^w} f(x^w, u^w, w) \\
\le (y-y^w)^T (u -u^w),
\end{align*}
where $V=\sum_i V_i$. Similarly by Assumption \ref{a.ip.im}, the system  (\ref{controller.nu}) satisfies
\[
 \dst\frac{\partial W}{\partial \eta} \bar{\phi}(\eta, v) +
 \dst\frac{\partial W}{\partial \eta^w} \bar{\phi}(\eta^w) \le
 (\lambda-\lambda^w)^T v - (\nu - \nu^{w})v,
\]
with $W= \sum_k W_k$ and $\bar{\phi}(\eta^{w}) = I_{m} \otimes \phi(\eta^{w})$.
 Bearing in mind the interconnection constraints
$u=(B\otimes I_p)  \lambda$,  $u^w=(B\otimes I_p) \lambda^w,$ and $v=-(B^T\otimes I_p) y,$  
and letting $U((x,x^w), (\eta, \eta^w))=V(x,x^w)+W(\eta, \eta^w)$ we obtain 
\begin{align*}
&\dot U((x,x^w), (\eta, \eta^w)) := \dot{V}(x,x^w) + \dot{W}(\eta, \eta^w)\\
& = (y-y^w)^T (u -u^w) +(\lambda-\lambda^w)^T v - (\nu-\nu^{w})^{T}v \\
& = (y-y^w)^T (B \otimes I_p)(\lambda-\lambda^w) \\
&-(\lambda-\lambda^w)^T (B^T \otimes I_p)y+(\nu-\nu^{w})^T (B^T \otimes I_p)y.
\end{align*}
By definition of output agreeement, $(B \otimes I_p)^T y^w=0$ and the previous equality becomes
\[\ba{rcl}
\dot U((x,x^w), (\eta, \eta^w)) &= & 
 \nu^T (B^T \otimes I_p)y\\
&=&  -||(B^T \otimes I_p)y||^2= -z^T z,
\ea\]
by definition of $\nu=-z$ and $\nu^{w} = 0$. Since $U$ is non-negative and non-increasing, then $U(t)$ is bounded. As $x^w, \eta^w$ are bounded\footnote{By definition, $(w,x^w, \eta^w)$ belongs to the $\omega$-limit set, which is compact. Hence, $x^w, \eta^w$ are bounded.}  and $U$ is regular, then $x,\eta$ are bounded as well. Hence the solutions exist for all $t$.  
Integrating the latter inequality we obtain 
\[
\dst\int_0^{+\infty} z^T(s) z(s) ds \le U(0). 
\]
By Barbalat's lemma, if one proves that $\frac{d}{dt} z^T(t) z(t)$ is bounded then one can conclude that $z^T(t) z(t)  \to 0$. Now, $z(t)=(B^T\otimes I_p) y=(B^T\otimes I_p) h(x,w)$ is bounded because $x,w$ are bounded. If $h$ is continuously differentiable and $\dot x, \dot w$ are bounded, then $\dot z$ is bounded and one can infer that $\frac{d}{dt} z^T(t) z(t)$ is bounded. By assumption, $w$ is the solution of $\dot w=s(w)$ starting from a forward invariant compact set. Hence, both $w$ and $\dot w$ are bounded. On the other hand, $\dot x$ satisfies
\[
\dot x = f(x, (B\otimes I_p)\bar{\psi}(\eta)-z, w)
\]
which proves that  it is bounded because $x, \eta, z$ were proven to be bounded, while $w$ is bounded by assumption. Therefore, $\dot x, \dot w$ are bounded and this implies that $\frac{d}{dt} z^T(t) z(t)$ is bounded. Then by Barbalat's Lemma we have $\lim_{t\to+\infty} z(t)=\mathbf{0}$ as claimed. 
\end{proof}

\subsection{Linear systems and distribution networks} \label{sec.OALinear}

We investigate next the output agreement problem for linear dynamical systems and focus on a routing control problem in inventory systems under time-varying demand and supply. 
%
%
Consider an inventory system with $n$ inventories and $m$ transportation lines, and let $B$ be the incidence matrix of the transportation network. The dynamics of the inventory system is given as
\begin{align}
\begin{split} \label{sys.Inventory1}
\dot{x} &= B\lambda + Pw,
\end{split}
\end{align}
where $x \in \mathbb{R}^{n}$ represents the storage level, $\lambda \in \mathbb{R}^{m}$ the flow along one line, and $Pw$ an external in-/outflow of the inventories, i.e., the supply or demand. We assume that the time varying supply/demand is generated by a linear dynamics
\begin{align}\label{sys.Inventory_w} \dot{w} = Sw \end{align}
and that it is balanced at any time, i.e., $\1^{T}Pw(t) = 0$ for all $t \geq 0$.  
The control problem is to find a distributed control law of the form
\begin{align} \label{sys.RoutingController}
\begin{split}
 \eta_{k} &= \Phi_{k}\eta_{k} + \Lambda_{k}v_{k} \\
 \lambda_{k} &= \Psi_{k}\eta_{k} + \Upsilon_{k}v_{k}
 \end{split}
\end{align}
such that for each initial condition $(w_{0}, x_{0}, \eta_{0})$ the solution of the closed loop system remains bounded and a balancing of the inventory levels is achieved, i.e.,  $\lim_{t \rightarrow \infty} B^{T}x = 0$. 

The closed-loop system \eqref{sys.Inventory1}, \eqref{sys.Inventory_w}, \eqref{sys.RoutingController} can be understood as feedback interconnection \eqref{interc0}, \eqref{interc1}, \eqref{interc2}, of the controller \eqref{sys.RoutingController} and the linear system 
\begin{align}
\begin{split}
\dot{w} &= Sw \\
\dot{x} &= u + Pw,
\end{split}
\end{align}
i.e., with $A = \mathbf{0}$ and $G = I_{n}$. Thus, the distribution problem can be understood as an output agreement problem.
The regulator equations \eqref{regulator.equations} become 
\begin{align}
\begin{split}
\Pi S = \Gamma + P, \quad
0 = B^{T}\Pi.
\end{split}
\end{align}
These conditions are satisfied with $\Pi = J$, i.e. the all-ones-matrix of appropriate dimension, and $\Gamma = JS-P$. 
We can directly conclude from Remark \ref{rmk.LinearSystems}, that there exists a matrix $H$ such that $\lambda^{w} = Hw$.\footnote{How such a matrix $H$ can be found is discussed in (\cite{DePersis2013}), and we refer the interested reader to this reference.} \\
The steady-state solution is such that $x^{w} = \Pi w = Jw = x_{*}^{w}\1$. 
 Since, by assumption, $\1^{T}Pw = 0$ at any time instant, it holds that $\1^{T}\dot{x}(t) = 0$ for all $t \geq 0$. However, this implies that $\1^{T} \dot{x}^{w} = \dot{\beta} \1^{T}\1 = 0$. The latter condition can only be satisfied for $\dot{\beta} = 0$, and therefore the steady-state solution $x^{w}$ must be a constant vector. 
Now, the steady-state routing $\lambda^{w}$ satisfies
\begin{align}
0 = B \lambda^{w} + Pw.
\end{align} 
Thus, any controller \eqref{sys.RoutingController} solving the output agreement problem, solves at the same time the exact routing problem of the time-varying supply/demand. 
The condition \eqref{embedding} is satisfied with $\tau = Id$, $\phi = S$ and $\psi = H$. 
Following our previous discussion, it remains to design the incrementally passive local controllers. Let in the following $H_{k}^{T}$ denote the $k$-th row of the matrix $H$. We know from \eqref{eqn.IncPassiveIM}, that the internal model controller on the edge $k$ of the form
\begin{align}
\begin{split}
\dot{\eta}_{k} = S\eta_{k} + H_{k}v_{k} \\
\lambda_{k} = H_{k}^{T}\eta_{k}
\end{split}
\end{align}
is incrementally passive.
Finally, it follows directly from Theorem \ref{th1} that the distributed internal model controller
\begin{align}
\begin{split} \label{eqn.Inventory1_Routing}
\dot{\eta} &= \bar{S}\eta - \bar{H}B^{T}x \\
\lambda &= \bar{H}\eta - B^{T}x
\end{split}
\end{align}
where $\bar{S} = I_{n} \otimes S$ and $\bar{H} = \mbox{block.diag}(H_{1}^{T}, \ldots, H_{m}^{T})$,
solves the output agreement problem and, additionally, achieves an exact routing of the time-varying supply through the network. This result has been also proven in \cite{DePersis2013}, but is presented here in the more general context of output agreement.

%
%
%
%

%
%
\subsection{Output agreement in the case of constant disturbances}

The critical assumption in the derivation presented above seems the incremental passivity of the internal model, i.e., Assumption \ref{a.ip.im}. 
However, the proof above exploits Assumption \ref{a.ip.im} only in a weaker form. In particular, it requires the incremental passivity property (\ref{di.eta}) not to hold with respect to any two trajectories, but only with respect to the real and the steady-state trajectory, i.e., with 
 $\eta_k'=\eta_k^w, v_k'=0, \lambda_k'=\lambda^w$. 
 Bearing in mind this observation, it is possible to find a storage function $W_k$ that fulfills (\ref{di.eta}) in the case of \emph{constant disturbances}. In fact in this case  $(\eta_k^w, \lambda_k^w)$  satisfy 
$
\lambda_k^w =\psi_k(\eta_k^w)$ for some constant $\eta^{w}$, i.e., 
$\dot \eta_k^w = 0$.

Let now $\psi_k$ be a strongly monotone function, and 
consider the following storage function (\cite{Jayawardhana2007}, \cite{Burger2013}):
\begin{align} \label{eqn.Bregman}
\begin{split}
W_k(\eta_k,\eta_k^w)=  \Psi_k(\eta_k)- \Psi_k(\eta_k^w) \\
+\nabla \Psi_k^T(\eta_k^w)(\eta_k-\eta_k^w),
\end{split}
\end{align}
where $\Psi_k : \R^q \to \R$ is a twice continuously differentiable function such that $\nabla \Psi_k(\eta_k)= \psi_k(\eta_k)$.\footnote{Note that $W_{k}$ is the Bregman distance between $\eta$ and $\eta_{w}$ for the function $\Psi$, \cite{Bregman1967}.} 
Now if $\psi_k$ is  monotone, $\Psi_k$ is convex and, by the global under-estimator property of the gradient \cite{Boyd2003}, we have
\[
\Psi_k(\eta_k)\ge  \Psi_k(\eta_k^w)+
\nabla \Psi_k^T(\eta_k^w)(\eta_k-\eta_k^w)
\]
for each $\eta_k, \eta_k^w$. If $\Psi_k$ is strictly convex, i.e., $\psi_{k}$ is strongly monotone, then the previous inequality holds if and only if $\eta_k=\eta_k^w$ (\cite{Boyd2003}), and $W_k$ is regular (\cite{Jayawardhana2007}). Hence, $W_k$ is a non-negative storage function if $\Psi_k$ is convex, and a positive regular storage function if $\Psi_k$ is strictly convex. Furthermore,
\begin{align*}
 \dst\frac{\partial W_k}{\partial \eta_k} \phi_k(\eta_k, v_k) =
 (\psi_k(\eta_k)-\psi_k(\eta_k^w))^T v_k. 
 \end{align*}
Hence, in the case of constant disturbances, Assumption \ref{a.ip.im} is always fulfilled and the output agreement problem is solved by the controllers
\be\label{controller.nu.constant.w}
\ba{rcl}
\dot \eta &=& v= -(B\otimes I_p)y\\
\lambda &=& \bar \psi(\eta) -(B^T\otimes I_p)y. 
\ea\ee
Control laws of the form \eqref{controller.nu.constant.w} have been studied in the context of network clustering in \cite{Burger2011}, \cite{Burger2011a} and in a more general network optimization framework in \cite{Burger2013}.


\section{Time-varying optimal distribution problems}\label{sec.OptimalDistribution}

We revisit now the distribution problem of inventory systems discussed in Section \ref{sec.OALinear}, i.e.,
\begin{align}
\dot{x} = B\lambda + Pw,
\end{align}
 with a time-varying external demand/supply. Let for now the supply/demand vectors be generated by a possibly nonlinear dynamics $\dot{w} = s(w).$
In contrast to Section \ref{sec.OALinear}, the control objective is not only to balance the inventory levels, but additionally to achieve an \emph{optimal} routing in the network. We associate to each transportation line a cost
 \[ \mc{P}_{k}(\lambda_{k}), \, k = 1,2,\ldots,m \]
  for the flow $\lambda_{k}$. Let $\mc{P}_{k}$ be convex and continuously differentiable.
  %
 We aim here to design routing controllers on the transportation lines of the form \eqref{internal.model.k},
taking only the imbalance between the incident inventories as inputs such that asymptotically all inventory levels are balanced and the flows in the network minimize the flow cost defined by the cost functions $\mc{P}_{k}$. \\
Before moving to the dynamic control problem, we briefly review the \emph{static optimal distribution problem}. Consider a fixed constant supply vector $w$. The (static) optimal distribution problem is to find a routing $\lambda^{w} \in \mathbb{R}^{m}$ such that
\begin{align}
\begin{split} \label{prob.OFP}
\lambda^{w} = \mbox{arg}\min \;& \sum_{k=1}^{m}\mc{P}_{k}(\lambda_{k}) \\
& 0 = B \lambda + Pw.
\end{split}
\end{align}
In the following, the notation $\mc{P}(\lambda) = \sum_{k=1}^{m} \mc{P}_{k}(\lambda_{k})$ will be used. The Lagrangian function associated to \eqref{prob.OFP} with the multiplier $\zeta$ is 
\[ \mc{L}(\lambda,v) = \mc{P}(\lambda) + \zeta^{T}(-B \lambda - P w).\]
From the Lagrangian, one obtains directly the optimality conditions (KKT--conditions). In particular, $(\lambda^{w},v^{w})$ is an optimal primal/dual solution pair to \eqref{prob.OFP} if the following nonlinear equations hold
\begin{align}
\begin{split} \label{eqn.OptimalityConditions}
\nabla \mc{P}(\lambda^{w}) - B^{T}\zeta^{w} &= 0 \\
B \lambda^{w} + Pw &= 0.
\end{split}
\end{align}
Note that the first condition simply implies $\nabla \mc{P}(\lambda^{w}) \in \mc{R}(B^{T})$ since $\zeta^{w}$ has no further constraints. 
We define the optimal routing/supply pairs as
\begin{align*}
\begin{split}
\Gamma = \{ (\lambda,w) \in \mathbb{R}^{m} \times \mc{W} \; : &\; \nabla \mc{P}(\lambda) \in \mc{R}(B^{T}), \\
&\; B \lambda + Pw = 0  \}.
\end{split}
\end{align*}
In particular, $(\lambda^{w},w) \in \Gamma$ if and only if $\lambda^{w}$ is an optimal solution to the static optimal distribution problem \eqref{prob.OFP} with the supply vector $w$.
%
%
The main difficulty associated with the set $\Gamma$ relates to the constraint $\nabla \mc{P}(\lambda) \in \mc{R}(B^{T})$.  This constraint can be avoided if the optimality conditions are expressed in terms of the \emph{dual solutions} $\zeta$. 
In what follows, we impose the condition that $\nabla \mc{P}$ is invertible.\footnote{Note that this is always given if $\mc{P}$ is strongly convex.}
The two optimality conditions \eqref{eqn.OptimalityConditions} can now be expressed as the following single nonlinear expression 
\[ B \nabla \mc{P}^{-1}(B^{T}\zeta^{w}) + Pw = 0.\] 
Bearing this in mind, we define the set of optimal dual solutions as
$
\Gamma_{D} = \{ (\zeta,w) \in \mathbb{R}^{n} \times \mc{W} : B \nabla \mc{P}^{-1}(B^{T}\zeta) + Pw = 0 \}.
$
We want to emphasize two properties of $\Gamma_{D}$. First, if $(\zeta^{w},w) \in \Gamma_{D}$ then $(\zeta^{w}+ c\1,w) \in \Gamma_{D}$ for any $c \in \mathbb{R}$. Second, $(\zeta^{w},w) \in \Gamma_{D}$ if any only if the corresponding routing strategy $\lambda^{w} = \nabla \mc{P}^{-1}(B^{T}\zeta^{w})$ satisfies $(\lambda^{w},w) \in \Gamma$. 
 We are now ready to formalize the dynamic problem. 
\begin{definition}
The \emph{time-varying optimal distribution problem} is solvable for the system \eqref{sys.Inventory1}, if there exists a controller \eqref{internal.model.k} such that any solution originating from $\mc{W} \times \mc{X} \times \Xi$ is bounded and satisfies
(i) $\lim_{t \rightarrow \infty} B^{T}x(t) = 0$; and
(ii)  $\lim_{t\rightarrow \infty} \mathrm{dist}_{\Gamma}\bigl( \lambda(t),w(t) \bigr) = 0.$ 
\end{definition}
Our previous discussion revealed that it can be advantageous to express the optimality conditions in terms of the dual solutions $\zeta$. In fact, we use this observation here to refine the control and restrict our attention to dynamic controllers of the form
\begin{align}
\begin{split} \label{sys.InventoryControllerDual}
\dot{\eta} &= \phi(\eta,v) \\
\lambda &= \nabla \mc{P}^{-1}(B^{T}\psi(\eta)). 
\end{split}
\end{align}
This control structure can be interpreted as an internal model controller for the \emph{dual} variables of the optimal network flow problem. 
For the time-varying optimal distribution problem to be feasible, it is necessary that the manifold
\begin{align}
H(w,x,\eta) = \begin{bmatrix} B^{T}x \\ B\nabla \mc{P}^{-1}(B^{T}\psi(\eta)) + Pw  \end{bmatrix} = 0
\end{align}
is invariant under the closed-loop dynamics
\begin{align} \label{sys.AutonomousFlow}
\begin{split}
\dot{w} &= s(w) \\
\dot{x} &= B\nabla \mc{P}^{-1}(B^{T}\psi(\eta))) + Pw \\
\dot{\eta} &= \phi(\eta,B^{T}x).
\end{split}
\end{align}
Note that, in contrast to the original output regulation problem, the ``output" function $H$ depends explicitly on the state of the controller $\eta$. 
However, at this point, the advantage of the internal model controller design for the dual variables becomes obvious. Let  $(w,x^{w},\eta^{w})$ be a solution 
to \eqref{sys.AutonomousFlow} starting in $\Omega(\mc{W} \times \mc{X} \times \Xi)$, satisfying 
\begin{align} \label{eqn.Balancing}
h(x) := B^{T}x^{w} = 0. 
\end{align}
The previous condition implies $x^{w}  = \beta \1$. Now, by the structure of the inventory dynamics follows $\1^{T}\dot{x} = 0$ at any time, and consequently
\[\dot{x}^{w} = 0 = B\nabla \mc{P}^{-1}(B^{T}\psi(\eta^{w})) ) + Pw.\]
Thus, the corresponding routing strategy
$ \lambda^{w} = \nabla \mc{P}^{-1}(B^{T}$ $\psi(\eta^{w})) ) $
is optimal at any point in time, i.e., $(\lambda^{w}(t), w(t))\in\Gamma$. Thus, by restricting the discussion to the ``dual" controller structure \eqref{sys.InventoryControllerDual}, we transformed the time-varying optimal distribution problem into an output agreement problem. \\
%
If the time-varying optimal distribution problem is solvable, then for any solution of the system $\dot{w} = s(w)$ there exists a solution to $\eta^{w}$ satisfying $\dot{\eta}^{w} = \phi(\eta^{w})$
such that
\begin{align} \label{eqn.Optimality_Etaw}
B\nabla \mc{P}^{-1}(B^{T}\psi(\eta^{w}(t))) ) + Pw(t)= 0.
\end{align}
%
Under sufficient smoothness conditions on  $\mc{P}^{-1}$ and on $s(x)$, there exists a differentiable trajectory $\eta^{w}$ satisfying \eqref{eqn.Optimality_Etaw}. 
Keeping this observation in mind, we can formalize conditions on the controller \eqref{sys.InventoryControllerDual}. 
Suppose there exists a map $\tau : \mc{W} \mapsto \Xi$ such that
\begin{align}
\begin{split} \label{eqn.Invariance}
\frac{\partial \tau}{\partial w} s(w) = \phi(\tau(w)), \; \quad \forall w \in \mc{W},\\
B\nabla \mc{P}^{-1}(B^{T}\psi(\tau(w))) + Pw = 0.
\end{split}
\end{align}
Then, the dynamical system $\dot{\eta} = \phi(\eta), \mu = \psi(\eta)$ has the following properties. If $\eta_{0} = \tau(w(0))$, then the solution $\eta(t)$ starting at $\eta_{0}$ is such that $\psi(\eta(t)) = \zeta^{w}$, that is $(\psi(t),w(t)) \in \Gamma_{D}$. Thus, if the condition \eqref{eqn.Invariance} is met, the internal model controller generates the correct time-varying dual solutions, and can produce the optimal routing strategy. We formalize this in the following result.
\begin{proposition}
Let condition \eqref{eqn.Invariance} hold. Then, for any initial condition $w(0) \in \mc{W}$, there exist an initial condition $(x_{0}, \eta_{0}) \in \mc{X} \times \Xi$ such that the solutions $(w(t), x(t),\eta(t))$ to \eqref{sys.AutonomousFlow} satisfy
(i) $B^{T}x(t) = 0$ and
(ii) $(\lambda(t), w(t)) \in \Gamma$, where $\lambda(t) = \nabla \mc{P}^{-1}(B^{T}\psi(\eta(t)))$.
\end{proposition}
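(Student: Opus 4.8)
The plan is to exhibit the solution explicitly rather than to argue abstractly. Given $w(0) \in \mc{W}$, I would set $\eta_{0} = \tau(w(0))$ and choose any $x_{0}$ in the cut space, i.e.\ with $B^{T}x_{0} = 0$ (for instance $x_{0} = \mathbf{0}$). I then claim that the triple consisting of $w(t)$ solving $\dot{w} = s(w)$, the constant $x(t) \equiv x_{0}$, and $\eta(t) = \tau(w(t))$ is the unique solution of the closed loop \eqref{sys.AutonomousFlow} emanating from $(w(0), x_{0}, \eta_{0})$, and that this trajectory manifestly satisfies (i) and (ii).

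I would verify the claim coordinate by coordinate. For the $\eta$-component, differentiating the candidate and using the first line of \eqref{eqn.Invariance} gives $\dot{\eta} = \frac{\partial \tau}{\partial w}s(w) = \phi(\tau(w)) = \phi(\eta)$; since the posited $x$ keeps $B^{T}x \equiv 0$ and the controller reduces to the bare internal model when unforced ($\phi(\eta,0) = \phi(\eta)$), this matches the $\eta$-equation of \eqref{sys.AutonomousFlow}. For the $x$-component, the candidate has $\dot{x} = 0$, whereas \eqref{sys.AutonomousFlow} yields $\dot{x} = B\nabla \mc{P}^{-1}(B^{T}\psi(\tau(w))) + Pw$, which vanishes identically by the second line of \eqref{eqn.Invariance}. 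Hence the candidate solves all three equations simultaneously, and uniqueness of solutions --- available under the standing smoothness hypotheses on $\mc{P}^{-1}$ and $s$ --- identifies it with the solution from the stated initial data.

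The two conclusions then follow by substitution. Assertion (i) holds because $x(t) \equiv x_{0}$ with $B^{T}x_{0} = 0$. For (ii), writing $\lambda(t) = \nabla \mc{P}^{-1}(B^{T}\psi(\tau(w(t))))$ and applying $\nabla \mc{P}$ gives $\nabla \mc{P}(\lambda(t)) = B^{T}\psi(\tau(w(t))) \in \mc{R}(B^{T})$, while $B\lambda(t) + Pw(t) = 0$ is again the second line of \eqref{eqn.Invariance}; together these are exactly the two defining conditions of $\Gamma$, so $(\lambda(t), w(t)) \in \Gamma$ for all $t$.

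The only genuinely delicate point is the self-consistency of the coupled $x$--$\eta$ loop: the $\eta$-dynamics depends on $B^{T}x$ while the $x$-dynamics depends on $\eta$, so the two equations cannot be integrated sequentially. The device that sidesteps this is precisely to posit the entire candidate trajectory in advance and check it against all three equations at once, letting the two lines of \eqref{eqn.Invariance} close the loop --- the first making $\tau(w(t))$ an orbit of the unforced internal model, the second freezing the component of $x$ relevant to $B^{T}x$. Everything beyond this is routine substitution.
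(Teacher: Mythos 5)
Your proof is correct and takes essentially the same route as the paper: the paper gives no separate proof, but the discussion immediately preceding the proposition is exactly your construction --- set $\eta_{0} = \tau(w(0))$ so that the first line of \eqref{eqn.Invariance} makes $\eta(t)=\tau(w(t))$ an orbit of the unforced internal model, and use the second line to conclude $\dot{x} \equiv 0$ from any balanced $x_{0}$, which yields (i) and, via $\nabla\mc{P}(\lambda(t)) = B^{T}\psi(\tau(w(t))) \in \mc{R}(B^{T})$ together with $B\lambda(t) + Pw(t) = 0$, membership in $\Gamma$ for (ii). Your explicit verification that the candidate triple solves the coupled closed loop \eqref{sys.AutonomousFlow}, plus the uniqueness remark, is simply a more careful write-up of that same argument.
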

Condition \eqref{eqn.Invariance} is the counterpart to the original controller regulator equation \eqref{embedding}. Note that, while the regulator equations for the systems dynamics are easily satisfied, as we discussed in Section \ref{sec.OALinear}, the difficulty of the time-varying optimal distribution problem results from partially fixed structure of the controller \eqref{sys.InventoryControllerDual}. 
%
We discuss next the distributed controller design for two specific representations of the optimal distribution problem.

\subsection{The Linear-Quadratic Case} \label{sec.LQ}

Suppose the supply is generated by a linear system
\begin{align*}
\dot{w} = Sw,\quad S + S^{T} = 0,
\end{align*}
and the transportation cost functions are quadratic
\begin{align*}
\mc{P}(\lambda) = \frac{1}{2}\lambda^{T}Q\lambda,
\end{align*}
for $Q = \mbox{diag}(q_{1},\ldots,q_{m})$ and $q_{k} >0$. In the linear-quadratic case one can solve the conditions \eqref{eqn.Invariance} directly.
 Consider the conditions \eqref{eqn.Invariance} with $\tau(w) = w$, and $\psi(\tau(w)) = Hw$, i.e.,
\begin{align}
\begin{split} \label{eqn.LQ}
BQ^{-1}BHw + Pw &= 0 \\
Sw &= \phi(w).
\end{split}
\end{align}
This requires $\phi(\eta) = S\eta$, i.e.,  the controller is a copy of the exosystem.
It remains to find a matrix $H$. Note that $L_{Q} = BQ^{-1}B^{T}$ is a weighted Laplacian matrix of the network. As $L_{Q}$ has one eigenvalue at zero, with the corresponding eigenvector $\1$, it is not invertible. 
However, one possible solution to \eqref{eqn.LQ} is 
\[H = -L_{Q}^{\dagger}P,\] 
where $L_{Q}^{\dagger}$ \emph{pseudo-inverse} of the weighted Laplacian $L_{Q}$, see e.g., \cite{Gutman2004}.\footnote{By the properties of $L_Q^\dagger$, it is promptly verified 
that 
\begin{align*}
\begin{split}
B Hw +Pw  = - B Q^{-1} B^T L_Q^\dagger Pw  +Pw  = \\[-0.8em]
- (I-\frac{\mathbf{1} \mathbf{1}^T}{n}) P w +P w = \mathbf{0}.
\end{split}
\end{align*}
}  
We can now construct a distributed controller. 
Recall that \eqref{sys.InventoryControllerDual} provides an internal model controller for the dual variables, assigned to nodes of the network. Thus, we assign an internal model controller to each node and introduce the variable $\eta_{i} \in \mathbb{R}^{d}$, satisfying the dynamics
\begin{align}
\begin{split}
\dot{\eta}_{i} &= S\eta_{i} \\
\zeta_{i} &= H_{i}^{T}\eta_{i}, \; i = 1,\ldots,n,
\end{split}
\end{align}
where $H_{i}^{T}$ is the $i$-th row of $H$. The routing at one edge computes then simply as $\lambda_{k} = q_{k}^{-1}(\zeta_{j} -\zeta_{i})$, where $j,i$ are the nodes incident to edge $k$. 
After introducing $\eta$ and $\zeta$ as the stacked vectors of $\eta_{i}$ and $\zeta_{i}$, respectively, and the matrices $\bar{S} = (I_{n} \otimes S)$, $\bar{H} = \mbox{block.diag}(H_{1}^{T}, \ldots,H_{n}^{T})$, we can express the overall controller as
\begin{align}
\begin{split}
\dot{\eta} &= \bar{S}\eta \\
\lambda &= Q^{-1}B^{T}\zeta = Q^{-1}B^{T}\bar{H}\eta.
\end{split}
\end{align}
This distributed internal model controller needs to be augmented with additional control inputs to ensure convergence to the desired stead state. This is formalized in the next result.
\begin{theorem}
Consider the inventory system \eqref{sys.Inventory1} with the supply generated by the linear dynamics $\dot{w} = Sw$.
Consider the controller
\[\ba{rcl}
\dot \eta &=& \bar{S} \eta + \bar{H}^{T} B Q^{-1} \nu\\
\lambda  &=& Q^{-1}B^{T} \bar{H} \eta +\nu.
\ea\]
with the interconnection condition
$ \nu = -B^{T}x. $
Then, every solution of the closed-loop system is bounded and (i) $\lim_{t\to +\infty} B^{T}x = 0$, and (ii) $\lim_{t\to +\infty} {\dist_{\Gamma}}(\lambda(t), w(t)) = 0$. 
\end{theorem}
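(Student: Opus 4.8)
The plan is to mirror the incremental-passivity/Lyapunov argument of Theorem~\ref{th1}, exploiting the fact that the dual parametrization $\lambda = Q^{-1}B^{T}\zeta$ makes the steady-state routing \emph{optimal} by construction, and then to close the argument with LaSalle's invariance principle so as to obtain (i) and (ii) at once. First I would fix the reference (steady-state) trajectory: take $x^{w} = \beta \1$ with $\beta$ a fixed constant, $\eta^{w} = \1_{n}\otimes w$, so that $\zeta^{w} = \bar{H}\eta^{w} = Hw$ and $\lambda^{w} = Q^{-1}B^{T}\bar{H}\eta^{w} = Q^{-1}B^{T}Hw$. Using $H = -L_{Q}^{\dagger}P$, $L_{Q} = BQ^{-1}B^{T}$ and the balancing identity $\1^{T}Pw = 0$, exactly as in the footnote one gets $B\lambda^{w}+Pw = L_{Q}Hw + Pw = 0$; together with $Q\lambda^{w} = B^{T}\zeta^{w}\in\mc{R}(B^{T})$ this certifies $(\lambda^{w}(t),w(t))\in\Gamma$ for all $t$. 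I would also verify that $(w,x^{w},\eta^{w})$ is an invariant trajectory of the closed loop: $\dot{x}^{w}=0=B\lambda^{w}+Pw$, $\dot\eta^{w}=\bar{S}\eta^{w}$ (since $\dot w = Sw$), and $B^{T}x^{w}=0$ so that $\nu^{w}=0$.

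The core step is the Lyapunov computation. With $V=\frac{1}{2}\|x-x^{w}\|^{2}$, $W=\frac{1}{2}\|\eta-\eta^{w}\|^{2}$ and $U=V+W$, I would show $\dot U = -\|B^{T}x\|^{2}$. For $V$, feasibility of the reference gives $\dot x-\dot x^{w}=B(\lambda-\lambda^{w})$, and with $B^{T}x^{w}=0$, $\nu=-B^{T}x$ this yields $\dot V = (B^{T}x)^{T}(\lambda-\lambda^{w}) = -\nu^{T}(\lambda-\lambda^{w})$. For $W$, skew-symmetry $\bar{S}+\bar{S}^{T}=0$ annihilates the quadratic term, leaving $\dot W = (\zeta-\zeta^{w})^{T}BQ^{-1}\nu = (Q^{-1}B^{T}(\zeta-\zeta^{w}))^{T}\nu = ((\lambda-\lambda^{w})-\nu)^{T}\nu$, where the last equality uses $\lambda-\lambda^{w}=Q^{-1}B^{T}(\zeta-\zeta^{w})+\nu$. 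Summing, the cross terms cancel and $\dot U = -\|\nu\|^{2} = -\|B^{T}x\|^{2}\le 0$. Since $U$ is nonincreasing, hence bounded, $\|w\|$ is constant ($S$ skew-symmetric), and $\1^{T}x$ is conserved ($\1^{T}B=0$, $\1^{T}Pw=0$), I conclude that $x$ and $\eta$ are bounded and the solutions are complete.

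Finally I would invoke LaSalle. Because $\eta^{w}=\1_{n}\otimes w$ and $x^{w}=\beta\1$ depend only on the state, $U$ is a genuine $C^{1}$ function of $(w,x,\eta)$ and the closed loop is autonomous, so every bounded trajectory converges to the largest invariant set in $\{\dot U = 0\}=\{B^{T}x=0\}$. On this set $B^{T}x\equiv 0$ forces $x=\beta(t)\1$; combined with $\1^{T}x$ constant this gives $\dot x = 0$, hence $B\lambda+Pw=0$ (feasibility), while $\nu=-B^{T}x=0$ gives $\lambda=Q^{-1}B^{T}\zeta$, i.e.\ $Q\lambda\in\mc{R}(B^{T})$ (optimality). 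Thus $(\lambda,w)\in\Gamma$ on the limit set, which delivers $\lim_{t}B^{T}x(t)=0$ and, by continuity of $(w,x,\eta)\mapsto(\lambda,w)$ together with closedness of $\Gamma$, $\lim_{t}\dist_{\Gamma}(\lambda(t),w(t))=0$.

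I expect the main obstacle to be part (ii): one must produce a reference that is simultaneously feasible and optimal along the whole trajectory, which is precisely what the dual parametrization buys, and then show that the invariant set inside $\{B^{T}x=0\}$ actually lands in $\Gamma$. A secondary care point is the LaSalle bookkeeping: sublevel sets of $U$ are not compact in the full state space (the $\1^{T}x$ direction is unconstrained by $U$), so I would use the conserved quantity $\1^{T}x$ and boundedness of $w$ to confine each trajectory to a compact set before applying the invariance principle. If one prefers to stay closer to Theorem~\ref{th1}, part (i) can instead be obtained from $\int_{0}^{\infty}\|B^{T}x\|^{2}\,ds<\infty$ via Barbalat's lemma, with (ii) argued separately on the limit set.
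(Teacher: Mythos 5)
Your proposal is correct and takes essentially the same route as the paper: the paper's proof also uses the incremental storage functions $V=\frac{1}{2}\|x-x^{w}\|^{2}$ and $W=\frac{1}{2}\|\eta-\eta^{w}\|^{2}$ and the dissipation identity of Theorem~\ref{th1} to obtain boundedness and $B^{T}x\to 0$, and it concludes optimality exactly as you do, by observing that on the invariant set $\{(x,\eta):B^{T}x=0\}$ one has $\dot{x}=0$, hence $B\lambda+Pw=0$ together with $Q\lambda\in\mc{R}(B^{T})$, i.e., the KKT conditions defining $\Gamma$. Your explicit construction of the reference trajectory $(x^{w},\eta^{w})=(\beta\1,\1_{n}\otimes w)$ and the LaSalle/compactness bookkeeping are more detailed than the paper's two-line argument (which defers to Theorem~\ref{th1} and Barbalat), but they are refinements of the same method, not a different one.
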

The proof of this result follows directly along the same lines as the proof of Theorem \ref{th1}, taking into account that $V(x,x^{w}) = \frac{1}{2}\|x-x^{w}\|^{2}$ and $W = \frac{1}{2}\|\eta - \eta^{w}\|^{2}$ are regular incremental storage functions.
For concluding the optimality, one has to note that any trajectory in the set $\mc{E} = \{ (x,\eta) : B^{T}x = 0\}$, which is invariant under the dynamics, is such that $\dot{x} = 0$. Thus, any trajectory $\tilde{\eta}$ in $\mc{E}$ is such that the corresponding routing
\[ 
\tilde{\lambda} = 
Q^{-1}B^{T}\bar H \tilde{\eta} = \nabla \mc{P}^{-1}(B^{T}\bar H \tilde{\eta}) 
\]
satisfies 
$B \tilde{\lambda} + Pw = 0,$
i.e., the optimality conditions \eqref{eqn.OptimalityConditions}. Thus, any routing $\tilde{\lambda}$ in the invariant set $\mc{E}$ is an optimal routing.

%
%
\subsection{Optimal distribution with constant supply} \label{sec.ConstantSupply}
A second version of the optimal distribution problem, that can be solved by the internal model approach, relates to problems with constant supply and demand, i.e., $s(w) = 0$. In this case, we need no restriction to quadratic objective functions, but can consider general convex cost functions $\mc{P}_{k}$.
The conditions \eqref{eqn.Invariance} become
\begin{align*}
0 = \phi(\tau(w)) \\
B\nabla \mc{P}^{-1}(B^{T}\psi(\tau(w))) + Pw = 0,
\end{align*}
and they are solved with $\phi(w) = 0$ and $\tau(w)$ such that $(\tau(w),w) \in \Gamma_{D}$. 
In the static case, it becomes again advantageous to place the internal model controllers at the edges, and not at the network nodes. However, we still consider an internal model for the dual variables, but consider now the dual variables $\sigma = B^{T}\zeta$ instead of $\zeta$ as introduced in \eqref{eqn.OptimalityConditions}. 
Consider now the controller
\begin{align}
\begin{split} \label{sys.IMC_Static}
\dot{\sigma} &= \nu, \quad \sigma(0) \in \mc{R}(B^{T}) \\
\lambda &= \nabla \mc{P}^{-1}(\sigma) + \nu.
\end{split}
\end{align}
Note that if $\nu(t) \in \mc{R}(B^{T})$ then also $\sigma(t) \in \mc{R}(B^{T})$ for all times $t \geq 0$. If the initial condition $\sigma(0)$ is chosen as $\sigma^{w} = B^{T}\zeta^{w}$, with $(\zeta^{w},w) \in \Gamma_{D}$, then, under the input $\nu = 0$, the controller \eqref{sys.IMC_Static} generates the desired output. 
The internal model controller \eqref{sys.IMC_Static} is now not incrementally passive for arbitrary input signals, but it is incrementally passive with respect to any constant input signal. A storage function can therefore be found in the structure \eqref{eqn.Bregman}.
\footnote{The storage function here takes the form $W_{k} = \mc{P}^{\star}(\sigma) - \mc{P}^{\star}(\sigma^{w}) - \nabla \mc{P}^{\star}(\sigma^{w}) (\sigma - \sigma^{w})$, where $\mc{P}^{\star}$ is the convex conjugate of $\mc{P}$, see \cite{Rockafellar1997}, \cite{Burger2013}.}
Based on our previous discussions, we can now immediately conclude that the controller \eqref{sys.IMC_Static} with the interconnection condition $\nu = -B^{T}x$ solves the optimal distribution problem.

\section{Conclusions}
We proposed an internal model control design approach for output agreement of incrementally passive nonlinear systems with time-varying external disturbances. Building upon theses results, we studied the optimal distribution problem in inventory systems with time-varying supply and demand. These problems can be cast as output agreement problems if their dual formulation is considered. We showed how the optimal distribution problem can be solved by  internal model controllers for the dual variables. The specific solution to the distribution problem is discussed for the linear-quadratic case and for problems with constant supplies. \\
In the case of constant disturbances and equilibrium independent passive systems, the paper \cite{Burger2013} characterizes the value on which the outputs agree and establishes insightful relations with a number of network optimization problems. In the case of time-varying disturbances, such a characterization is still elusive and represents an interesting future research avenue. In the same work, the role of bounded interactions is investigated, a topic that has not been addressed in this paper. Passivity has provided a natural setting to study cooperative control algorithms in the presence of delays. The effect of delays in problems such as those studied in this paper has  been so far neglected,  to the best of our knowledge. It would be interesting to fill in this gap.

{\footnotesize
 \bibliographystyle{alpha}
\bibliography{Literature}
}

\end{document}